\newtheorem{theorem}{Theorem}
\newtheorem{corollary}[theorem]{Corollary}
\newtheorem{lemma}[theorem]{Lemma}
\newtheorem{example}[theorem]{Example}
\theoremstyle{definition}
\newtheorem{construction}{Construction}
\Crefname{construction}{Construction}{Constructions}
\newcommand{\W}[1]{{\sf W[#1]}\xspace}
\newcommand{\FPT}{{\sf FPT}\xspace}
\newcommand{\XP}{{\sf XP}\xspace}
\newcommand{\pNP}{{\sf para-NP}\xspace}
\providecommand*{\shuffle}{%
  \mathbin{\mathpalette\shuffle@{}}%
}
\newcommand*{\shuffle@}[2]{%
  \sbox0{$#1\vcenter{}$}%
  \kern .15\ht0 
  \rlap{\vrule height .25\ht0 depth 0pt width 2.5\ht0}%
  \raise.1\ht0\hbox to 2.5\ht0{%
    \vrule height 1.75\ht0 depth -.1\ht0 width .17\ht0 %
    \hfill
    \vrule height 1.75\ht0 depth -.1\ht0 width .17\ht0 %
    \hfill
    \vrule height 1.75\ht0 depth -.1\ht0 width .17\ht0 %
  }%
  \kern .15\ht0 
}
\newtheorem{claim}{Claim}
\Crefname{claim}{Claim}{Claims}
\def\ps@pprintTitle{%
  \let\@oddhead\@empty
  \let\@evenhead\@empty
  \let\@oddfoot\@empty
  \let\@evenfoot\@oddfoot
}
\begin{document}

\begin{frontmatter}

\title{Parameterized Complexity of Scheduling Problems in Robotic Process Automation}

\author[1]{Michal Dvořák}
\ead{michal.dvorak@fit.cvut.cz}

\author[2]{Antonín Novák}
\ead{antonin.novak@cvut.cz}

\author[2]{Přemysl Šůcha\corref{cor1}}
\ead{suchap@cvut.cz}

\author[1]{Dušan Knop}
\ead{dusan.knop@fit.cvut.cz}

\author[3]{Claire Hanen}
\ead{claire.hanen@lip6.fr}

\cortext[cor1]{Corresponding author}

\affiliation[1]{organization={Faculty of Information Technology, Czech Technical University in Prague},city={Prague},country={Czech Republic}}
\affiliation[2]{organization={Czech Institute of Informatics, Robotics and Cybernetics, Czech Technical University in Prague},
                   city={Prague},
                   country={Czech Republic}}

\affiliation[3]{organization={CNRS, LIP6, F-75005, Sorbonne University},
                   city={Paris},
                   country={France}}
\date{\today}

\begin{abstract}
    This paper studies the growing domain of Robotic Process Automation (RPA) problems. Motivated by scheduling problems arising in RPA, we study the parameterized complexity of the single-machine problem $1|\operatorname{prec},r_j,d_j|*$. We focus on parameters naturally linked to RPA systems, including chain-like precedences, the number of distinct processing times, and the structure of the time windows. We show that the problem is \W{2}-hard parameterized by the number of chains, even with only two prescribed processing times and two distinct time-window lengths. This hardness remains even for distinct processing times and time windows under prec-consistent time windows. On the positive side, we obtain polynomial-time algorithm when all jobs share a single time-window length and \FPT when the processing times, release times and deadlines are chain-uniform. We also show that the problem lies in \XP when parameterized by the width of the precedence relation.

\end{abstract}

\begin{keyword}
Scheduling \sep Precedences \sep Robotic Process Automation \sep Parameterized Complexity 
\end{keyword}

\end{frontmatter}


\clearpage
\section{Introduction}
With the increasing market competition, companies are pressured to seek ways to remain competitive while increasing their revenue, decreasing costs, and streamlining their operations.
Achieving those goals is often connected with improving the efficiency of companies' business processes.
To run daily operations, large companies developed internal processes for business-related activities such as invoice processing, payment handling, and customer relationship management, but also for their back-office operations such as report generation and human resources activities, e.g., employee onboarding, payroll, etc.
These workflows used to be operated by humans who are susceptible to errors, do not work 24/7, and are costly to scale up.
Hence, to enable further growth and increase efficiency, companies are forced to automate their processes.

Robotic Process Automation (RPA)~\citep{wewerka2020robotic} is quickly becoming the major technology for automating repetitive, rule-based tasks across many sectors by emulating human interactions with software, thus delivering faster throughput, higher accuracy, and continuous operation without risks of human fatigue. 
The increasing adoption of RPA systems by large enterprises has attracted the research community, which started to address different aspects of RPA---economic perspective~\citep{aguirre2017automation,lamberton2017impact}, reports on real case-studies~\citep{huang2019applying} or mathematical optimization~\citep{seguin2021minimizing}.

A typical RPA system consists of a set of business processes that are automated.
Each process performs operations on the individual data entries, also called \textit{items}, which represent the basic unit of work.
An example of a process may be a data extraction from invoices sent by suppliers via email, with the individual operation representing an optical character recognition over the PDF file and storing the extracted entries in the database.
The execution of the processes might be constrained by time windows imposed by the company's internal rules, customer requirements, or technological limitations.
Each operation performed on a data entry is carried out by a \textit{software robot}, which requires a hardware resource to run on (i.e. a machine, in scheduling terminology), and may require a software license.

The number of licenses is typically limited~\citep{seguin2021minimizing} as the companies are charged by RPA vendors based on the number of robot licenses used.
Items to be handled by a process are stored in a given number $k\in\mathbb{N}_0$ of \textit{queues} and the process retrieves the items (typically, but not exclusively) in the first-in first-out (FIFO) fashion.
Hence, the order in which items are processed is constrained by chain precedence, or a general directed acyclic graph with limited width $w\in\mathbb{N}_0$. 
The processing time of an item $j$ is denoted by $p_j\in\mathbb{N}$.
Generally, they may be arbitrary, but often they are dependent on the queue in which they appear.

Thus, it is reasonable to assume that all items in queue $Q$ have processing times all equal to $p_Q$, as they are workloads of the same type.
The item $j$ enters the queue at time $r_j$, denoting its release time.
The customer may choose to impose the deadline $d_j$, by which the item $j$ should be completed.
The most typical situation in RPA systems is to set the time between an item's arrival and its deadline to a constant value, referred to as Service Level Agreement (SLA) $\Delta=d_j-r_j$.

An RPA scheduler aims to optimize the execution of processes within the company, respecting its hardware resources, time-related constraints, and license budget in order to guarantee that all items are processed within the prescribed time, as reflected by the SLA the company committed to.
The scheduling in RPA systems opens new application areas for scheduling research, as it introduces new problems to be studied, as well as provides new applications for the decades-old results.
In order to design efficient scheduling algorithms for RPA systems, we need to understand the complexity of the new emerging problems as well as connections to the other well-known problems.

One of the pioneering works in the area of application of optimization methods in RPA systems was presented by \cite{seguin2021minimizing}.
They coin the problem of minimizing the cost of robot licenses as an optimization problem solved by a mixed-integer linear programming model.
They describe the whole range of relevant aspects of RPA scheduling, for example, they deal with different process types, the existence of licenses, deadlines, etc.
However, their work focuses more on capacity planning than the scheduling problem, disregarding the sequencing aspect of the problem.
As a result, their formulation leads to a bin-packing type of optimization problem.
For example, the precedence relations between the items are not considered, which becomes an increasingly relevant feature of RPA systems, as the complexity of the automated workflows grows rapidly. 
A term associated nowadays with RPA is the hyperautomation that is a broader strategic approach that combines RPA with other technologies (e.g., artificial intelligence, machine learning) to automate and optimize end-to-end business processes beyond task-level automation.
It is projected \citep{Mordor2024Hyperautomation} that by 2030, the hyperautomation market size will extend beyond basic tasks, and will increase by $2.5\times$ compared to 2025 levels.
Hence, to support these workflows, more elaborate modeling beyond simple capacity planning is needed.

For efficient scheduling in these complex systems, it is insufficient to rely just on simple online rules governing the execution of activities.
Hence, we would like to retrieve the schedule over a rolling horizon as the result of a well-defined optimization problem.
To design efficient algorithms to solve these problems, we need to understand the complexity of its various aspects---namely, its (potential) tractability if some of the natural parameters such as number of queues $k$, precedence width $w$, maximum processing time $p_{\max}$, number of distinct processing times $\#p_j$, or number of distinct SLA's (i.e., number of distinct time windows) $\#(d_j-r_j)$ (see \Cref{sec:preliminaries} for formal definitions) are bounded by a constant which is fairly small.
This finer complexity characterization, also known as \textit{parameterized complexity}, may provide the needed problem insights.
It is known that in scheduling environments with parallel machines, already the simplest problems tend to be hard \citep{vanBevernRolfSuchy2017}.
Therefore, in studying parameterized complexity of the problems that result from RPA scheduling, we rather focus on a family of single-machine problems with respect to the key parameters of RPAs as identified in Table~\ref{tab:table_results}. However, in the end, we summarize the complexity results and formulate strategies for designing RPA scheduling involving parallel machine settings.

\paragraph{RPA scheduling model}
The formal relation between RPA parameters and their main aspects in relation to the classical offline scheduling model is given as follows.
The item $j$ represents the basic unit of work, which corresponds to a job to be scheduled.
The RPA system consists of a set of $k$ queues that contain items to be processed.
A process accesses a queue $Q$ and spends $p_j$ time units of work to process item~$j$.
We recognize two main types of queues: (i)~the ones with uniform workload, meaning that all items $j$ in queue $Q$ have equal processing time $p_j = p_Q$ (i.e., transactions of the same type), and (ii)~with non-uniform workload, in which every item may have a different processing time (i.e., different amount of work to process an item is needed).
This implies that in RPA systems with solely uniform queues, we have only up to $k$ different processing time values.

Item $j$ enters a specific queue at time $r_j$, corresponding to the release time.  
Deadline $d_j$ on the completion time of item $j$ is imposed by the existence of SLA $\Delta$ as $d_j=r_j+\Delta$, for some value of $\Delta$. Value $\Delta$ is typically a constant for all items or for items of a specific queue $Q$. 
The relation of deadlines to the release times motivates us to define a parameter \textit{slack} $\delta_j = d_j-r_j-p_j$, which expresses a degree of freedom in start times of items. The maximum slack is $\delta = \max_j\delta_j$.

The processes use queues to communicate with each other, i.e., passing the items between them, via a publisher-subscriber architecture.
This means a process can either put or consume items from a given queue.
Most typically, we encounter two types of processes---\textit{loaders} and \textit{worker} processes.
A loader process puts new items to be processed (i.e., generating the workload for another process) into the queue at a time $t$ in batches, meaning that all items in the batch share the same release time $r_j=t$.
If at time $t$ only items from a single batch are present in a queue $Q$, then all items share the common release time $r_Q$.
As a consequence of the existence of SLA $\Delta$, all deadlines $d_j$ are equal to a constant ${d}_Q$ as well.
Further, a worker process subscribes to the queue and consumes the items in the FIFO fashion.
Hence, the processing of items by a worker process is constrained by chain precedences. In the context of RPA systems, what are conventionally termed items will henceforth be referred to as \emph{jobs}, aligning our terminology with that commonly employed in the scheduling domain.

\section{Related work}

\paragraph{Robotic process automation}
Most of the work on RPA focuses on economic studies quantifying the effects of implementing robotic automation in various industries \citep{ribeiro2021robotic,prucha2024roboticprocessautomationdriver}, but operations research literature remains largely underdeveloped.
A notable exception consists of the work by Séguin et al.~\citep{seguin2020robotic,seguin2021minimizing}, which developed a simplified mathematical model of robots processing business transactions.
\cite{seguin2020robotic} present an integer linear programming model aiming to minimize the number of software licenses for robots needed to process the given volume of transactions.
The main limitation of the work is that it did not consider precedence relations between the transactions.

\paragraph{Scheduling} 
Scheduling with release times, processing times, and deadlines corresponds to the classical problem \textsc{Sequencing With Release Times And Deadlines}, i.e., $1|r_j,d_j|*$ which is known to be strongly \textsf{NP}-hard~\cite[p.~236, SS1]{GareyJ1979}. 

\cite{Lawler1973} then studied the scheduling problems under precedence constraints and showed that the problem $1|\operatorname{prec},r_j|C_{\max}$ is solvable in polynomial time. More generally, the problem $1|\operatorname{prec}|\max_j f_j(C_j)$ problem can be solved in quadratic time due to~\cite{Lawler1973}, where $f_j$ are any non-decreasing cost functions of the completion times.
\cite{Horn72Treelike}~and~\cite{Sidney75} independently extended this result by developing an $O(n\log n)$ algorithm for $1|\operatorname{prec}|\sum w_j C_j$ when the precedence constraints are restricted to be a tree. Later, an efficient algorithm for more generalized series-parallel constraints was developed by~\cite{Adolphson77}. From the negative side, \cite{Lenstra78} showed that in general $1|\operatorname{prec}|\sum w_j C_j$ is \textsf{NP}-complete. Further works of \cite{Woeginger00} and~\cite{Brucker99} include the study of chain-like precedences. They provide approximation and linear-time algorithms for two machines, respectively.

Subsequent research explored more elaborate forms of precedences. \cite{WIKUM199487} considered generalized precedence constraints where each constraint of the form $j_1<j_2$ is assigned two numbers $\ell_{j_1j_2}\leq u_{j_1j_2}$ and the time between beginning of $j_2$ and end of $j_1$ must be in the interval $[\ell_{j_1j_2},u_{j_1j_2}]$. They provide polynomial-time algorithms for some restricted cases of the precedence relation and (strong) \textsf{NP}-hardness results for the general cases.

\paragraph{Parameterized complexity}
Beyond classical polynomial-time solvability and \textsf{NP}-hardness, a more fine-grained understanding comes from the perspective of parameterized complexity. Numerous works have studied the parameterized complexity of scheduling problems for parallel machines~\citep{Bodlaender95W2,ChenM18,Cieliebak04,vanBevernRolfSuchy2017,Bevern16,Bodlaender20,MallemHK22}, single machine~\citep{FellowsM03,HermelinKSTW15,MallemHM24,DEWEERDT2021ejor,HERMELIN2019ejor}, or in the high multiplicity encoding regime~\citep{KnopKLMO19}. Below, we summarize those most relevant to our setting.

\cite{Cieliebak04} considered scheduling instances where the \emph{slack} or \emph{flexibility} of each job is bounded. Recall that slack of a job $j$ is the number $\delta_j=d_j-r_j-p_j$ and maximum slack is $\delta = \max_j \delta_j$.
The \emph{flexibility} is $\lambda_j = \frac{{d}_j-r_j}{p_j}$. They show that the problem of minimizing the number of machines to schedule jobs with release times and deadlines is polynomial-time solvable if the maximum slack is equal to $1$ and \textsf{NP}-complete even when the maximum slack is $2$ or the maximum flexibility $\lambda = \max_j\lambda_j$ is an arbitrary fixed number $\lambda > 1$. Lastly, they provide an algorithm with running time $f(\delta,\mu)\operatorname{poly}(n)$ for $P|r_j,d_j|*$, where $\mu$ is the maximum number of overlapping job windows (also known as \emph{path-width}), i.e., it is \FPT parameterized by the combined parameter $\delta+\mu$.

Following the work of Cielibak et al., \cite{vanBevernRolfSuchy2017} showed that $P|r_j,d_j|*$ is weakly \textsf{NP}-hard for $m=2$ machines and any $\lambda > 1$ and strongly $\W{1}$-hard parameterized by the number of machines $m$. Lastly, they present an \FPT algorithm for the combined parameter $m$ and $\delta$.

The work of \cite{Cieliebak04} was also extended to consider precedence constraints, and \cite{Hanen2024ParadependentTasks} proposed an \FPT algorithm parameterized by $\min(\delta,p_{\max})$ and $\mu$ combined as well as \pNP hardness for parameter $\mu$ alone on two machines. For single machine, the problem $1|\operatorname{prec},r_j,d_j|C_{\max}$ becomes \FPT parameterized by $\mu$ \citep{MallemHM24}.

\cite{Bevern16} considered the parameterized complexity of scheduling with precedence constraints without delays. Their considered parameter is the \emph{width} of the precedence relation, which is the maximum size of an independent set of jobs. They show that the problem is weakly \textsf{NP}-hard for two machines and their reduction produces a precedence relation with $k=3$ chains. They further show \W{2}-hardness of the problems $P2|\operatorname{prec}|C_{\max}$ parameterized by the width of the precedence relation, even with processing times $1$ or $2$. On the positive side, they provide an \FPT algorithm parameterized by the width of the precedence relation combined with \emph{lag} -- the maximum delay after which a job can start after its release time.

Regarding the parameter width, \cite{Mohring1989} also showed that $P|\operatorname{prec}|C_{\max}$ is solvable in $O(\binom{w}{m}n^w)$ time where $w$ is the width, $m$ is the number of machines and $n$ the number of jobs, i.e., it is in $\XP$ parameterized by $w$ for constant number of machines.

\cite{Bodlaender20} considered the problem \textsc{Chain Scheduling With Delays} where the input consists of $k$ chains of jobs and each chain has its release time and deadline. The precedences between jobs in each chain are generalized -- they impose a minimum (or exact) time between two consecutive jobs in the resulting schedule. They show that the problem is \W{1}-hard parameterized by $k$ and \W{$t$}-hard for every $t$ parameterized by \emph{thickness} -- maximum number of chains whose $[r_C,{d}_C)$ (i.e., release time---deadline interval) overlap.

Further parameterized complexity results include the works of \cite{MallemHK22,MallemHM24}. They consider the underlying undirected graph induced by the time windows $[r_j,d_j)$ and parameterize the scheduling path-width and maximum delay. \cite{MallemHK22} show that the problem $1|\operatorname{chains}(\ell_{i,j}),r_j,d_j,p_j=1|*$ is \textsf{para-NP}-hard parameterized by path-width with either minimum or exact delays, and that $P|\operatorname{chains}(\ell_{i,j}),r_j,d_j,p_j=1|*$ is \textsf{NP}-hard even when the maximum delay is equal to $1$. On the other hand, the problem with general precedences becomes \FPT parameterized by maximum delay and path-width combined.

\cite{MallemHM24} then considered the parameter \emph{proper level}, which is the maximum number of time windows that can strictly include another time window on both ends. They show that $1|\operatorname{prec},r_j,d_j|C_{\max}$ is \FPT parameterized by proper level. Note that the proper level is upper bounded by path-width, which in turn is upper bounded by the maximum slack; hence, this also implies that the above problems are \FPT parameterized by path-width and maximum slack. Moreover, the problem admits a polynomial kernel parameterized by the vertex cover of the underlying interval graph of the time windows and unless $\textsf{NP}\subseteq \textsf{coNP}/_{\operatorname{poly}}$, there is no polynomial kernel parameterized by path-width \citep{Mallemthesis24}.

\section{Contribution and outline}

We examine the parameterized complexity of $1|\operatorname{prec},r_j,d_j|*$ and consider parameters closely related to RPA scheduling. We consider chain-like precedences and show that the problem is $\W{2}$-hard parameterized by the number of chains, even when we have exactly two prescribed processing times and just two distinct time windows (\Cref{thm:w2hardness}). In other words, we show that the problem is \pNP-hard parameterized by the combined parameter $\#p_j +p_{\max} + \#(d_j-r_j)$. We then refine the construction and obtain \pNP-hardness for $\#p_j+\#(d_j-r_j)$ even when ensuring prec-consistent time windows (\Cref{cor:w2hardness_preccon}). We then show that the problem is solvable in polynomial time when $\#(d_j-r_j)=1$ (\Cref{thm:polytime_onewindow}) and observe that the problem is also \FPT when the release times, deadlines and processing times are chain-uniform (\Cref{thm:knop_red}). Finally, we show that $1|\operatorname{prec},r_j,d_j|*$ is in \XP parameterized by $w$ (i.e., the width of the precedence relation) (\Cref{thm:xp_width}). We summarize our contributions in the context of known results in Table~\ref{tab:table_results}.

\begin{table}
    \centering
    \resizebox{1\textwidth}{!}{%
    \begin{tabular}{|c|c|c|c|c|}
    \toprule
    Problem & Parameter(s) & Result & Reference \\
    \midrule
    $P|\operatorname{prec},p_j=1|C_{\max}\leq 3$ & - & \textsf{NP}-hard & \cite{Lenstra78} \\
    
     $P2|\operatorname{chains}(k)|C_{\max}$ & $k$ & \pNP-hard & \cite{Bevern16} \\
    
    $P2|\operatorname{chains}(k),r_j,d_j|C_{\max}$ & $\mu + k$ & \pNP-hard & \cite{Hanen2024ParadependentTasks} \\

    $P|\operatorname{prec}|C_{\max}$ & width $w$ of $\operatorname{prec}$ & \textsf{XP} for $O(1)$ machines &  \cite{Mohring1989} \\

    $P|\operatorname{chains},r_j,d_j,p_j=1|C_{\max}$ & - & \textsf{P} &\cite{Dror97} \\

    \midrule
    $1|r_j,d_j|*$ & - & \textsf{NP}-hard &  \cite{GareyJ1979} \\
    $1|\operatorname{prec},r_j|C_{\max}$ & - & \textsf{P}& \cite{Lawler1973} \\
    $1|\operatorname{prec},d_j|C_{\max}$ & - & \textsf{P} &\cite{Lawler1973}\\
    
    $1|\operatorname{prec},r_j,d_j,p_j=p|C_{\max}$ & - & $\textsf{P}$ &\cite{Simons1978AFA} \\

    $1|r_j,d_j,p_j\in \{1,p\}|C_{\max}$ & - & $\textsf{P}$  &\cite{Sgall12} \\
    $1|r_j,d_j|C_{\max}$ & $\#(r_j,d_j,p_j)$ & \FPT & \cite{KnopKLMO19}\\

    $1|\operatorname{prec},r_j,d_j|C_{\max}$ &  $\delta$  &\FPT &\cite{MallemHM24}\\
    $1|\operatorname{prec},r_j,d_j|C_{\max}$ &  $\mu$ & \FPT &\cite{ MallemHanenMunier:22:ParameterizedSingle}\\

    $1|\operatorname{prec},r_j,d_j,p_j\in \{1,p\}|*$ & $p_{\max}$ & \W{1}-hard  &\cite{LeslieSh92} \\
    
    $1|\operatorname{prec},r_j,d_j,p_j\in \{1,p\}|*$ & $\#r_j+\#d_j+\#p_j$ & \pNP-hard  &\cite{LeslieSh92} \\

    $1|r_j,d_j,p_j\in\{p,q\}|*$ & - & \textsf{NP}-hard & \cite{Elffers17AuxPQ} \\

\midrule

    $1|\operatorname{chains}(k),r_j,d_j,p_j\in\{p,q\}|*$ & $k$ & $\W{2}$-hard & \Cref{thm:w2hardness} \\
    $1|\operatorname{chains},r_j,d_j,p_j\in\{p,q\}|*$ & $\#p_j+p_{\max} + \#(d_j-r_j)$ & \pNP-hard & \Cref{thm:w2hardness} \\
    
    $1|\operatorname{chains},r_j,d_j,\operatorname{prec}\text{-con}|*$  & $\#p_j+\#(d_j-r_j)$ & \pNP-hard & \Cref{cor:w2hardness_preccon} \\
    $1|\operatorname{prec},r_j,d_j|C_{\max}$ & width $w$ of $\operatorname{prec}$ & \XP  &\Cref{thm:xp_width} \\
    
    $1|\operatorname{prec},r_j,d_j,\operatorname{prec}\text{-con}|C_{\max}$ & $\#(d_j-r_j)=1$ & \textsf{P} & \Cref{thm:polytime_onewindow} \\
   $1|\operatorname{chains}(k),r_Q,{d}_Q,p_j=p_Q|*$ & $k$ & \FPT &\Cref{thm:knop_red}\\

    \midrule
    $1|\operatorname{chains},r_j,d_j|C_{\max}$ & $\#(r_j,d_j,p_j)$ & ? & -\\

    $1|\operatorname{chains}(k),r_j,d_j,p_j=p_Q|*$ & $k$ & ? &  -\\
    
    \bottomrule
    \end{tabular}%
    }
    \caption{Summary of the key results and related problems. The first part of the table summarizes known results for parallel machines scheduling problems. The second part summarizes known results for single machine scheduling problems. The third part highlights the main contributions of this paper and the last part indicates open problems. The $+$ in the parameters column indicates that the result holds for any combination of these parameters. For the formal definitions of the parameters refer to \Cref{sec:preliminaries} and for the discussion of open problems refer to \Cref{sec:conclusion}.
    }

    \label{tab:table_results}
\end{table}

The paper is organized as follows. In \Cref{sec:preliminaries} we review the basic notions from parameterized complexity and scheduling. In \Cref{sec:hardness}, we present our hardness results, identifying parameters under which the problem remains intractable, while in \Cref{sec:algorithms}, we provide algorithmic results, including \FPT and \XP algorithms and corresponding running-time bounds. Based on the complexity results, \Cref{sec:rpa_pc_relation_discussion} describes strategies for designing scheduling algorithms for RPA systems. We conclude with open questions and future research directions in \Cref{sec:conclusion}.

\section{Preliminaries}\label{sec:preliminaries}
\paragraph{General notation}
For nonnegative integers $a,b$ we let $[a,b]:=\{a,a+1,\ldots,b-1,b\}$ and $[a]:=[1,a]$. In particular, $[0]=\emptyset$. A \emph{strict partial order} is a pair $(J,\prec)$ where 
$J$ is a finite set and $\prec$ is irreflexive and transitive binary relation on~$J$. Let $(J,\prec)$ be a strict partial order, a set $J'\subseteq J$ is \emph{$\prec$-independent} if $j_1\not\prec j_2,j_2\not\prec j_1$ for any two distinct $j_1,j_2\in J'$. The \emph{width} of $(J,\prec)$, denoted $w$, of $\prec$ is the size of the largest $\prec$-independent set in $J$.

\paragraph{Parameterized complexity}
Parameterized complexity is a framework aiming to study the computational complexity of problems according to their inherent difficulty with respect to one or more \emph{parameters} of the input or output. This allows for a finer analysis of \textsf{NP}-hard problems than in classical complexity. Let~$\Sigma$ be a fixed alphabet and $\Sigma^*$ the set of all strings over $\Sigma$. A \emph{parameterized problem} is $L\subseteq \Sigma^*\times \mathbb{N}$. An instance of a parameterized problem is a pair $(x,k)$ where $x$ denotes the input instance and $k$ is the \emph{parameter value}. A parameterized problem is said to be \emph{fixed-parameter tractable} or (in the class) \textsf{FPT} if it is solvable in $f(k)\cdot \operatorname{poly}(|(x,k)|)$ time for some computable function $f$. A parameterized problem is (in the class) \textsf{XP} if it can be solved in $|(x,k)|^{f(k)}$ time for some computable function $f$. Showing that a problem $L$ is in \XP parameterized by parameter $k$ is equivalent to claiming that for each fixed $k$ there is a polynomial-time algorithm for $L$, where the degree of the polynomial might possibly depend on $k$. Showing that a problem $L$ is in \FPT means that for each fixed $k$ there is a polynomial-time algorithm where only the multiplicative constant in the running time may depend on $k$ (i.e., the degree of the polynomial is a constant independent of $k$). If a parameterized problem $L$ is shown to be \textsf{NP}-hard even if the parameter $k$ in the input instance is bounded by a constant, then the problem is said to be \textsf{para-NP}-hard. In such a case, the parameterized problem  $L$ cannot belong even to the class \XP unless $\textsf{P}=\textsf{NP}$. Central to the parameterized complexity framework is the \textsf{W}-hierarchy of classes:
\[
    \textsf{FPT}\subseteq \textsf{W[1]}\subseteq \textsf{W[2]}\subseteq\cdots \subseteq \textsf{XP}
\]
where all inclusions are conjectured to be strict. The only known strict inclusion is \mbox{$\textsf{FPT}\subsetneq \textsf{XP}$}. As in classical complexity, there is the notion of polynomial-time reduction to show \textsf{NP}-hardness, in the parameterized complexity framework, there is the notion of \emph{parameterized reductions}.

Let $L,L'$ be parameterized problems. A \emph{parameterized reduction} from $L$ to $L'$ is an algorithm~$\mathcal{A}$ that given an instance $(x,k)$ of problem $L$, computes an instance $(x',k')$ of $L'$ such that $(x,k)\in L\Leftrightarrow (x',k')\in L'$, the running time of $\mathcal{A}$ is $f(k)\cdot \operatorname{poly}(|(x,k)|)$ and $k'\leq g(k)$ for some computable functions $f,g$. Parameterized problem $L$ is said to be \textsf{W[$t$]}-hard if there is a parameterized reduction from each problem in \textsf{W[$t$]} to $L$. Showing that a parameterized problem $L$ is \textsf{W[$t$]}-hard for some $t$ shows that $L$ is unlikely to be in \textsf{FPT}.
For more details, we refer the reader to standard parameterized complexity textbooks (e.g.~\cite{CyganParaAlg2015}).

\paragraph{Exponential Time Hypothesis}
The \emph{Exponential Time Hypothesis} (ETH) asserts, roughly speaking, that there is no algorithm for \textsc{3Sat} in time $2^{o(n')}$ where $n'$ is the number of variables in the input formula \citep{ImpagliazzoPZ01journal}. As a corollary, we use the result of \cite{ChenHKX06journal} that there is no $f(k)n^{o(k)}$ algorithm for \textsc{Dominating Set} parameterized by the solution size for any computable function~$f$ unless ETH fails.

\paragraph{Scheduling}
We consider non-preemptive scheduling on a single machine. On this machine, our aim is to process all jobs. The set of jobs is denoted $J$. Each job $j\in J$ has a release time $r_j\in \mathbb{Z}^+_0$, processing time $p_j\in\mathbb{Z}^+$ and deadline $d_j\in \mathbb{Z}^+_0$. The \emph{slack} of a job $j\in J$ is the number $\delta_j={d}_j-r_j-p_j$. The \emph{maximum slack} of $J$ is $\delta:=\max_{j\in J}\delta_j$. The \emph{window size} of job $j$ is the number $d_j-r_j$ and $\#(d_j-r_j)$ is the \emph{number of window sizes} defined as $\#(d_j-r_j)=|\{d_j-r_j\mid j \in J\}|$.
A \emph{type} of job $j$ is the triple $(r_j,d_j,p_j)$. The \emph{number of job types}, denoted $\#(r_j,d_j,p_j)$ is defined as $\#(r_j,d_j,p_j)=|\{(r_j,d_j,p_j)\mid j \in J\}|$.

A \emph{precedence constraint} is a strict partial order $\prec$ on $J$. If $\prec$ is a disjoint union of chains, i.e., $J=Q_1\cup Q_2\cup \cdots \cup Q_k$ and for every $j,j'\in Q_i$ we have $j\prec j'$ or $j' \prec j$ and for every $j\in Q_{i},j'\in Q_{i'}$ for $i\neq i'$ we have $j\nprec j'$ and $j'\nprec j$. We write $p_Q,r_Q,d_Q$ to represent the processing times, release times, or deadlines that are only dependent on the chain $Q\in \{Q_1,Q_2,\ldots, Q_k\}$. In such a case, we say that the processing times (or release times, deadlines) are \emph{chain-uniform}. We say that the instance is \emph{$\operatorname{prec}$-consistent} if whenever $j\prec j'$, then $r_{j'}\geq r_j+p_j$ and $d_j\leq d_{j'}-p_{j'}$.

A \emph{schedule} for $J$ is a function $\sigma \colon J \to \mathbb{Z}^+_0$ assigning each job $j\in J$ a start time $\sigma_j:=\sigma(j)$. A schedule~$\sigma$ is \emph{feasible} if:
\begin{itemize}
    \item no job starts before its release: $\forall j \in J:\sigma_j\geq r_j$,
    \item each job is processed before its deadline: $\forall j \in J: \sigma_j+p_j\leq {d}_j$,
    \item no two jobs are processed at the same time: for any $j,j'\in J,j\neq j'$ we have $\sigma_{j}\geq \sigma_{j'}+p_{j'}$ or $\sigma_{j'}\geq \sigma_{j}+p_{j}$,
    \item the schedule respects the precedence constraints: For any $j,j'\in J$ if $j\prec j'$, then $\sigma_j\leq \sigma_{j'}$.
\end{itemize}

The \emph{completion time} of a job $j\in J$ under schedule~$\sigma$ is the time point $C_j=\sigma_j+p_j$. 
The \emph{makespan} of $\sigma$ is $C_{\max}:=\max_{j\in J}C_j$. The makespan of an empty schedule is $0$.

\section{Hardness results}\label{sec:hardness}
In this section we show strong \W{2}-hardness of $1|\operatorname{chains}(k),r_j,d_j|*$ parameterized by $k$ -- the number of chains. As a by-product of our reduction, we also obtain \pNP-hardness for the combined parameter $\#p_j+p_{\max}+\#(d_j-r_j)$. We first present the problem that we will reduce from.

For a word $w\in\Sigma^*$ let $w[i]$ denote the character at position $i$ in $w$ (indexed from $1$) and $|w|$ the length of $w$. Let $u_1,u_2\in \Sigma^*$ be two words. The \emph{shuffle product} of $u_1$ and $u_2$, denoted $u_1\shuffle u_2$, is the set of all $\frac{(|u_1|+|u_2|)!}{|u_1|!|u_2|!}$ ways of interleaving the letters of $u_1$ and $u_2$ while respecting their order in both strings. See \Cref{ex:shuffle_product}:
\begin{example}\label{ex:shuffle_product}
    ${\color{red}ab} \shuffle {\color{blue}cd}=\{{\color{red}ab}{\color{blue}cd},{\color{red}a}{\color{blue}c}{\color{red}b}{\color{blue}d},{\color{red}a}{\color{blue}cd}{\color{red}b},{\color{blue}cd}{\color{red}ab},{\color{blue}c}{\color{red}ab}{\color{blue}d},{\color{blue}c}{\color{red}a}{\color{blue}d}{\color{red}b}\}$.
\end{example}
Formally, $v\in u_1\shuffle u_2$ if and only if there are strictly increasing functions $f_1\colon [|u_1|]\to [|v|],f_2\colon [|u_2|]\to [|v|]$ mapping positions of $u_1$ and $u_2$ to positions of $v$ such that for every index $x$ we have $v[f_1(x)]=u_1[x]$ and $v[f_2(x)]=u_2[x]$, for any $x\in[|u_1|],y\in[|u_2|]$ we have $f_1(x)\neq f_2(y)$ and for every $i\in[v]$ we have either $i\in f_1([|u_1|])$ or $i\in f_2([|u_2|])$.
More generally, for $\ell$ strings $u_1,u_2,\ldots,u_{\ell}\in \Sigma^*$ we have $v\in u_1\shuffle u_2 \shuffle \cdots \shuffle u_\ell$ if there are increasing functions $f_i\colon [|u_i|]\to [|v|]$ that map the positions of $u_i$ to the positions of $v$ such that $v[f_i(x)]=u_i[x]$ and the images of $f_i$ form a partition of $[|v|]$.

The decision problem we will be using in our reduction is \textsc{Binary Shuffle Product}, formally defined as follows:

\begin{center}
\begin{tabular}{|r|l|}
    \hline
    &\textsc{Binary Shuffle Product}
     \\\hline {Input}: & Words $u_1,u_2,\ldots,u_\ell,v$ over a binary alphabet $\Sigma$. \\\hline
     {Output}:& Is $v\in u_1\shuffle u_2 \shuffle \cdots \shuffle u_\ell$? \\\hline
\end{tabular}
\end{center}

It is known that for unbounded $\ell$ \textsc{Binary Shuffle Product} is \textsf{NP}-hard~\citep{WarmuthH84} and \W{2}-hard parameterized by $\ell$~\citep{Bevern16}. We begin by describing a (parameterized) reduction from \textsc{Binary Shuffle Product} to $1|\operatorname{chains}(k),r_j,d_j,p_j\in \{p,q\}|*$.

\begin{construction}\label{construction:w2hardness}
    Let $p,q$ be any two distinct positive integers, without loss of generality, let $p<q$. Let  $(u_1,u_2,\ldots,u_\ell,v)$ be an instance of \textsc{Binary Shuffle Product} over the binary alphabet $\Sigma=\{p,q\}$. We treat characters $p,q$ both as symbols of $\Sigma$ and as the integers $p,q$. For string $x\in \Sigma^*$ and $a\in \Sigma$ let $|x|_a$ denote the number of occurences of $a$ in $x$. If $|v|_a\neq \sum_{i=1}^k|u|_a$ for some $a$, we output a trivial no-instance of $1|\operatorname{chains}(k),r_j,d_j,p_j\in \{p,q\}|*$. Otherwise, we create the resulting instance as follows (see \Cref{fig:para_reduction} for an illustration).

    The number of chains is $k=\ell + 1$. The first chain contains $|v|$ many \emph{guard} jobs $g_0\prec g_1\prec \cdots \prec g_{|v|}$. All guard jobs have processing time equal to $p$ and $d_{g_i}-r_{g_i}=p$. The release times are given by $r_{g_i}=p\cdot i + \sum_{k=1}^iv[k]$ for all $i\geq 0$.

    Next, for every string $u_i$ we create a chain of $|u_i|$ jobs $x_1^i\prec x_2^i \prec \cdots \prec x_{|u_i|}^i$ corresponding to the letters of $u_i$. The release time of all these jobs is $0$ and the deadline is $(|v|+1)\cdot p+\sum_{k=1}^{|v|}v[k]$ (which is the same as $d_{g_{|v|}}$). The processing times are $p_{x_j^i}=u_i[j]$.
\end{construction}

\begin{figure}
        \centering
        \begin{tikzpicture}
        \draw[thick, ->] (-0.5,0) -- (10,0) node[below] {$t$};

        \foreach \x in {0,1,2,3,4,5,6,7,8,9} {
            \draw[thick] (\x,-0.3) -- (\x,0.3);
            \node[below] at (\x,-0.3) {\x};
        }
        \foreach \a/\b/\label in {0/1/$g_0$, 2/3/$g_1$, 5/6/$g_2$, 8/9/$g_3$} {
            \fill[red,opacity=0.3] (\a,0.4) rectangle (\b,0);
            \node[above] at ({(\a+\b)/2},0.5) {\label};
        }
        \end{tikzpicture}    
        \caption{The structure of the guard jobs in the reduction from \Cref{construction:w2hardness} for $\Sigma=\{p,q\}=\{1,2\}$ and word $v=122$. The red rectangles represent the time windows $[r_{g_i},d_{g_i})$.}
        \label{fig:para_reduction}
    \end{figure}
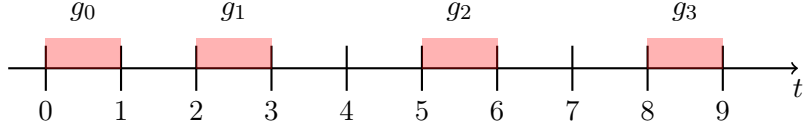

\begin{lemma}\label{lem:construction_correctness}
    Let $(u_1,u_2,\ldots, u_\ell,v)$ be an instance of \textsc{Binary Shuffle Product} and $\mathcal{I}$ the instance of $1|\operatorname{chains}(k),r_j,d_j|*$ produced by \Cref{construction:w2hardness}. $\mathcal{I}$ is yes-instance if and only if $(u_1,u_2,\ldots,u_{\ell},v)$ is yes-instance.
\end{lemma}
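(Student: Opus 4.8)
## Proof proposal

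The plan is to establish the biconditional by exhibiting, in each direction, an explicit correspondence between a schedule of $\mathcal{I}$ and an interleaving witness $v \in u_1 \shuffle \cdots \shuffle u_\ell$. The central observation is that the guard jobs $g_0 \prec g_1 \prec \cdots \prec g_{|v|}$ are ``rigid'': each $g_i$ has $d_{g_i} - r_{g_i} = p = p_{g_i}$, so its slack is $0$ and in any feasible schedule it must start exactly at $\sigma_{g_i} = r_{g_i} = p\cdot i + \sum_{k=1}^{i} v[k]$. Consequently the guard jobs carve the timeline $[0, d_{g_{|v|}})$ into $|v|$ consecutive gaps, where the gap between $C_{g_{i-1}}$ and $\sigma_{g_i}$ has length exactly $v[i] \in \{p,q\}$ (one checks $\sigma_{g_i} - C_{g_{i-1}} = (p\cdot i + \sum_{k=1}^i v[k]) - (p\cdot(i-1) + \sum_{k=1}^{i-1}v[k] + p) = v[i]$), and there is no room before $g_0$ or after $g_{|v|}$. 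Since the total processing time of all the $x$-jobs is $\sum_a a \cdot |v|_a = \sum_{i=1}^{i}\sum_j u_i[j]$ which, by the preprocessing check $|v|_a = \sum_i |u_i|_a$, equals $\sum_{i=1}^{|v|} v[i]$ — exactly the total free space — every gap must be filled completely, and a gap of length $v[i]$ can be filled only by a single $x$-job of processing time exactly $v[i]$ (a length-$p$ gap cannot hold a $q$-job since $q > p$, and cannot hold more than one job or a $q$-job; a length-$q$ gap holds either one $q$-job or — if $q$ were a multiple of $p$ — conceivably several $p$-jobs, so this subtlety needs care, see below).

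For the forward direction (schedule $\Rightarrow$ shuffle): given a feasible $\sigma$, read off for each $i \in [|v|]$ the $x$-job(s) placed in the $i$-th gap; define $f_m$ for chain $m$ by sending the $j$-th job $x_j^m$ to the index of the gap it occupies. Because each chain is a chain in $\prec$ and $\sigma$ respects $\prec$, the jobs $x_1^m \prec \cdots \prec x_{|u_m|}^m$ occupy gaps in strictly increasing order, so $f_m$ is strictly increasing; because the gaps are disjoint and jointly exhausted, the images of the $f_m$ partition $[|v|]$; and because the gap $i$ has length $v[i]$ and is filled by $x_j^m$ with $p_{x_j^m} = u_m[j]$, we get $v[f_m(j)] = v[i] = u_m[j]$. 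This is precisely the definition of $v \in u_1 \shuffle \cdots \shuffle u_\ell$. Conversely (shuffle $\Rightarrow$ schedule): given witnessing increasing maps $f_m$, place $g_i$ at $r_{g_i}$ and place $x_j^m$ in gap number $f_m(j)$, i.e. at start time $C_{g_{f_m(j)-1}}$. Feasibility is routine: release times of the $x$-jobs are $0$ and their common deadline equals $d_{g_{|v|}}$, so the window constraints hold; the no-overlap constraint holds because the $f_m$ images partition $[|v|]$ and each gap receives exactly one job whose length matches the gap; and precedence holds because each $f_m$ is increasing.

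The step I expect to be the main obstacle is the ``packing rigidity'' argument — ruling out the possibility that a single gap is filled by \emph{several} $x$-jobs (which could only happen, if at all, when $q$ is a multiple of $p$, e.g. $p=1,q=2$, so a length-$q$ gap could in principle be filled by two $p$-jobs instead of one $q$-job), or that some $x$-job straddles a guard job. The second is immediate from the $0$-slack rigidity of the guards. For the first, a counting argument saves us: by the preprocessing check the total length of $q$-jobs among the $x$'s is $q \cdot |v|_q$ and the total number of length-$q$ gaps is exactly $|v|_q$, while the total length of $p$-jobs is $p\cdot|v|_p$ and the number of length-$p$ gaps is $|v|_p$; since $q$-jobs cannot fit in $p$-gaps, \emph{all} $q$-jobs must go into $q$-gaps, and there being exactly $|v|_q$ of each with matching total length forces a perfect one-to-one placement of $q$-jobs into $q$-gaps, after which the $p$-jobs are forced into the $p$-gaps one-to-one as well. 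I would write this counting/exchange argument carefully as the heart of the lemma; everything else is bookkeeping. Finally I note the trivial-instance case: if the preprocessing outputs a trivial no-instance because $|v|_a \neq \sum_i |u_i|_a$ for some $a$, then no interleaving can exist (letter counts are shuffle-invariant), so the equivalence holds vacuously.
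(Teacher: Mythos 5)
Your proposal is correct and follows essentially the same route as the paper: the zero-slack guard jobs are pinned at their release times, carving the horizon into gaps of lengths $v[1],\ldots,v[|v|]$, and the shuffle witness functions $f_i$ correspond exactly to the assignment of $x$-jobs to gaps. The one place you go beyond the paper is the explicit counting/exchange argument ruling out several short jobs sharing one long gap (relevant when $q$ is a multiple of $p$) --- the paper compresses this into ``it is not hard to verify,'' so your more careful treatment of that step is a welcome addition rather than a deviation.
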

\begin{proof}
    Note that since $d_{g_i}-r_{g_i}=p$, in any feasible schedule $\sigma$ we get $\sigma_{g_i}=r_{g_i}$. We shall refer to the time slot between $g_{i-1}$ and $g_i$ as the $i$-th time slot.

   On the one hand, if $v\in u_1\shuffle u_2 \shuffle \cdots \shuffle u_k$, then we schedule the guard jobs in the only possible way ($\sigma_{g_i}=r_{g_i}$). The jobs $x_j^i$ are scheduled in the $f_{i}(j)$-th time slot, i.e., between the jobs $g_{f_{i}(j)-1},g_{f_{i}(j)}$.

    On the other hand, a feasible schedule induces for each word $u_i$ the function $f_{i}\colon [|u_i|]\to [|v|]$ mapping the indices $j$ of $u_i$ to the indices of $v$ as follows. The value $f_{i}(j)$ is the number of non-guard jobs scheduled before the job $x_j^i$ plus one (i.e., including the job $x_j^i$ itself). It is not hard to verify that such $f_{i}$'s are valid functions to witness $v\in u_1\shuffle u_2 \shuffle \cdots \shuffle u_{\ell}.$ 
\end{proof}

Notice that the instance $\mathcal{I}$ from \Cref{construction:w2hardness} can be clearly produced in polynomial time even if all the numbers are encoded in unary. Moreover, we have $\#(d_j-r_j)\leq 2$ and the processing times satisfy $p_j\in \{p,q\}$. We obtain:

\begin{theorem}\label{thm:w2hardness}
    For any two distinct positive integers $p,q$, the problem $1|\operatorname{chains}(k),r_j,d_j,p_j\in \{p,q\}|*$ is strongly \W{2}-hard parameterized by $k$ and \pNP-hard parameterized by any combination of $\#p$, $p_{\max}$, and $\#(d_j-r_j)$. Moreover, the problem cannot be solved in $f(k)n^{o(k)}$ time for any computable function $f$ unless ETH fails.
\end{theorem}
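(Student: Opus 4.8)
The plan is to derive everything in Theorem~\ref{thm:w2hardness} from \Cref{lem:construction_correctness} by checking the resource bounds produced by \Cref{construction:w2hardness} and invoking the known hardness of \textsc{Binary Shuffle Product}. The correctness of the reduction is already in hand, so what remains is purely bookkeeping about (i)~the parameter $k$, (ii)~the encoding size, and (iii)~the structural parameters $\#p_j$, $p_{\max}$, and $\#(d_j-r_j)$.

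First I would verify the \W{2}-hardness. \Cref{construction:w2hardness} sets $k=\ell+1$, and it runs in polynomial time (indeed, as noted just before the theorem, even with unary encoding), so it is a valid parameterized reduction from \textsc{Binary Shuffle Product} parameterized by $\ell$. Since the latter is \W{2}-hard parameterized by $\ell$ by \cite{Bevern16}, and \Cref{lem:construction_correctness} gives the yes/no equivalence, we conclude $1|\operatorname{chains}(k),r_j,d_j,p_j\in\{p,q\}|*$ is \W{2}-hard parameterized by $k$. For \emph{strong} \W{2}-hardness I would observe that all release times and deadlines in the construction are bounded by $d_{g_{|v|}}=(|v|+1)p+\sum_{k}v[k]\le (|v|+1)(p+q)$, which is polynomial in the input size (for fixed $p,q$); hence the numbers can be written in unary without blowing up the instance, so the hardness survives restriction to unary-encoded instances.

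Next, the \pNP-hardness for the combined parameter: the construction uses only processing times in $\{p,q\}$, so $\#p_j\le 2$ and $p_{\max}=q=O(1)$; and since every guard job has window length exactly $p$ while every $x^i_j$ job has window length exactly $d_{g_{|v|}}-0=d_{g_{|v|}}$, we get $\#(d_j-r_j)\le 2$. Thus every instance output by the reduction has all three of $\#p_j$, $p_{\max}$, $\#(d_j-r_j)$ bounded by an absolute constant, while \textsc{Binary Shuffle Product} with unbounded $\ell$ is \textsf{NP}-hard by \cite{WarmuthH84}; composing with \Cref{lem:construction_correctness} yields \textsf{NP}-hardness even under these constant bounds, i.e.\ \pNP-hardness for any combination of the three.

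Finally, the ETH lower bound. Here the only subtlety is that the parameterized reduction from \textsc{Binary Shuffle Product} in \cite{Bevern16} should itself originate, via a chain of reductions, from \textsc{Dominating Set} parameterized by solution size, in such a way that the final parameter $k$ is linear (or at worst $O(\ell)=O(\mathrm{opt})$) in the \textsc{Dominating Set} solution size; combined with our $k=\ell+1$, an algorithm for $1|\operatorname{chains}(k),\dots|*$ running in $f(k)n^{o(k)}$ would give an $f'(\kappa)n^{o(\kappa)}$ algorithm for \textsc{Dominating Set} with solution size $\kappa$, contradicting \cite{ChenHKX06journal} under ETH. I expect the main obstacle to be exactly this last point: tracking the parameter blow-up through the \textsc{Dominating Set} $\to\cdots\to$ \textsc{Binary Shuffle Product} chain to confirm it is linear, rather than, say, polynomial of higher degree (which would only rule out $f(k)n^{o(\sqrt{k})}$-type bounds); if the cited reductions are linear in the parameter this is immediate, and otherwise one restates the ETH consequence with the appropriate exponent. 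Everything else is a direct read-off from the construction.
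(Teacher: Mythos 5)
Your proposal is correct and follows essentially the same route as the paper's proof: \W{2}-hardness via \Cref{construction:w2hardness} and \Cref{lem:construction_correctness} with $k=\ell+1$, \pNP-hardness from the constant bounds $\#p_j\le 2$, $p_{\max}=q$, $\#(d_j-r_j)\le 2$ together with the \textsf{NP}-hardness of \textsc{Binary Shuffle Product}, and the ETH lower bound by tracing the parameter through the \textsc{Dominating Set} reduction of \cite{Bevern16}. The one point you flag as needing verification --- that $\ell$ is linear in the \textsc{Dominating Set} solution size --- is exactly what the paper checks (indeed $\ell=k'+3$), so your argument goes through as stated.
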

\begin{proof}
    \textsc{Binary Shuffle Product} is \W{2}-hard parameterized by $\ell$ by the result of~\cite{Bevern16}. By \Cref{construction:w2hardness} and \Cref{lem:construction_correctness} the hardness holds even when $p_j\in\{p,q\}$ and all numbers are encoded in unary. It remains to show the $f(k)n^{o(k)}$ complexity lower bound. We note that the reduction showing \W{2}-hardness of \textsc{Binary Shuffle Product} parameterized by $\ell$ of~\cite{Bevern16} is from the \textsc{Dominating Set} problem parameterized by the solution size $k'$ and the resulting parameter of the \textsc{Binary Shuffle Product} instance $\ell$ is linear in the parameter of the \textsc{Dominating Set} (in fact $\ell = k'+3$). The parameter of the instance of $1|\operatorname{chains}(k),r_j,d_j,p_j\in \{p,q\}|*$ from \Cref{construction:w2hardness} is also linear in the original parameter of \textsc{Binary Shuffle Product} ($k=\ell + 1$). Combining this with the result of \cite{ChenHKX06journal} about the $f(k')n^{o(k')}$ lower bound for \textsc{Dominating Set} parameterized by solution size $k'$, we obtain the same lower bound for $1|\operatorname{chains}(k),r_j,d_j|*$ parameterized by $k$.
\end{proof}

\subsection{Hardness with prec-consistency}
Prec-consistency of time windows has allowed \FPT algorithms for $1|\operatorname{prec},r_j,d_j|C_{\max}$  parameterized by proper level \citep{MallemHM24}. In \Cref{construction:w2hardness}, the release times and deadlines of the chains are not consistent with the precedence constraints, thus the complexity of our problem for parameters $\#p,\#(d_j-r_j)$ assuming prec-consistency was an open question. By a slight modification of \Cref{construction:w2hardness}, we can obtain hardness even when the time windows are $\operatorname{prec}$-consistent, although we increase the value of the parameter $p_{\max}$.

\begin{construction}\label{construction:w2hardness_preccon}
    Let $p,q$ be any two distinct positive integers, without loss of generality, let $p<q$.
    We begin with an instance $(u_1,u_2,\ldots,u_\ell,v)$ of \textsc{Binary Shuffle Product} over the binary alphabet $\Sigma=\{p,q\}$ and follow \Cref{construction:w2hardness}. Now, instead of using the smaller value for the guard jobs, we use the value $q$ for the processing time and time window of the guard jobs. In other words, $r_{g_i}=q\cdot i + \sum_{k=1}^iv[k]$ for every $i\in \{0,1,\ldots,n\}$ and $d_{g_i}=r_{g_i}+q$ for $i\in \{0,1,\ldots, n - 1\}$ (we define $d_{g_n}$ later). In order to have the release times and deadlines of the remaining $x$-jobs prec-consistent, we adjust the release times as follows. We let $r_{x_j^i}=(j+1)\cdot q + \sum_{k=1}^jv[k]$ and $d_{x_j^i}=r_{x_j^i}+\Delta$ where $\Delta = (|v|+1)\cdot q + \sum_{k=1}^{|v|}v[k]$. Finally, for $i=n$, we set $d_{g_i}=r_{g_i}+\Delta$. For all the guard jobs, the processing time is equal to $p_{g_i}=d_{g_i}-r_{g_i}$.
\end{construction}

Observe that the main idea of the construction does not change; now, even the deadlines of all jobs are greater than or equal to the deadlines in \Cref{construction:w2hardness}. Let $\mathcal{I}$ be an instance constructed from \Cref{construction:w2hardness} and $\mathcal{I}'$ an instance constructed from \Cref{construction:w2hardness_preccon}. Every feasible schedule of $\mathcal{I}$ is also a feasible schedule of $\mathcal{I}'$, on the other hand, the deadline and processing time of the last guard job $g_n$ prevents a feasible schedule to assign anything after $r_{g_n}$, hence it is also a feasible schedule for $\mathcal{I}$. Notice that whenever $x_j^i \prec x_{j+1}^i$, then 
\[r_{x_{j+1}^i}=  (j+2)\cdot q + \sum_{k=1}^{j+1}v[k] = \left((j+1)\cdot q + \sum_{k=1}^{j}v[k]\right)   + \left(v[j+1] + q\right)  \geq r_{x_j^i}+p_{x_j^i}.\] 
Symetrically, one can verify that $d_{x_j^i}\leq d_{x_{j+1}^{i}}-p_{x_{j+1}^{i}}$. This is an easy corollary of the fact that all $x$-jobs have the length of their time window equal to $\Delta$. Hence, the time windows are prec-consistent and we have $\#(d_j-r_j)\leq 2$.

\begin{corollary}\label{cor:w2hardness_preccon}
    The problem $1|\operatorname{chains}(k),r_j,d_j|*$ is strongly $\W{2}$-hard parameterized by $k$ and strongly \textsf{NP}-hard even when $\#p\leq 3,\#(d_j-r_j)\leq 2$, and the time windows are prec-consistent.
\end{corollary}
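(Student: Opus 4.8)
The plan is to use \Cref{construction:w2hardness_preccon} itself as the reduction and to verify that it meets every requirement of \Cref{cor:w2hardness_preccon}. Given an instance $(u_1,\dots,u_\ell,v)$ of \textsc{Binary Shuffle Product} over $\Sigma=\{p,q\}$, I would let $\mathcal I'$ be the instance of $1|\operatorname{chains}(k),r_j,d_j|*$ it produces and let $\mathcal I$ be the instance of \Cref{construction:w2hardness} for the same input. For correctness I would rely on the analysis carried out immediately after \Cref{construction:w2hardness_preccon}: the enlarged and shifted windows of $\mathcal I'$ still admit every feasible schedule of $\mathcal I$, and the processing time $\Delta$ of the last guard job $g_n$ is chosen to block the end of the horizon so that no job of $\mathcal I'$ can be placed after $r_{g_n}$, which makes every feasible schedule of $\mathcal I'$ feasible for $\mathcal I$ as well; hence the two instances have the same answer. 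That same paragraph also records that $\mathcal I'$ has prec-consistent windows and only two distinct window lengths, so $\#(d_j-r_j)\le 2$. Combining the schedule equivalence with \Cref{lem:construction_correctness}, $\mathcal I'$ is a yes-instance if and only if $(u_1,\dots,u_\ell,v)$ is a yes-instance of \textsc{Binary Shuffle Product}.

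Next I would record the remaining parameter bound and the resource accounting. The processing times occurring in $\mathcal I'$ are $p$ and $q$ (carried by the $x$-jobs, since $p_{x_j^i}=u_i[j]$) and $\Delta$ (carried by every guard job), so $\#p\le 3$. The number of chains is $k=\ell+1$, hence linear in the parameter $\ell$ of \textsc{Binary Shuffle Product}. Finally, with $p,q$ treated as constants, every release time, deadline and processing time in $\mathcal I'$ is bounded by a polynomial in the total length of the input words, so $\mathcal I'$ is computable in polynomial time even when all numbers are written in unary.

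Finally I would assemble the pieces. Since \textsc{Binary Shuffle Product} parameterized by $\ell$ is \W{2}-hard \cite{Bevern16}, $k=\ell+1$, and $(u_1,\dots,u_\ell,v)$ is a yes-instance exactly when $\mathcal I'$ is, the construction is a parameterized reduction witnessing that $1|\operatorname{chains}(k),r_j,d_j|*$ parameterized by $k$ is \W{2}-hard; and since \textsc{Binary Shuffle Product} is \textsf{NP}-hard for unbounded $\ell$ and all the numbers are unary-encodable, the same construction is a polynomial-time reduction witnessing strong \textsf{NP}-hardness, with the produced instances satisfying $\#p\le 3$, $\#(d_j-r_j)\le 2$ and prec-consistency. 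The step I expect to carry all the weight is the schedule equivalence between $\mathcal I$ and $\mathcal I'$ that the first paragraph takes from the discussion after \Cref{construction:w2hardness_preccon}: one has to check, purely from the arithmetic of the $r_{g_i}$, the $r_{x_j^i}$ and $\Delta$, that raising the guard length to $q$ and lifting the $x$-job release times destroys no feasible ``shuffle'' schedule (the slot an $x$-job lands in never starts before that job's release) and, just as importantly, creates no new feasible schedule (no $x$-job fits before $g_0$ because $r_{g_0}=0$, and none fits after $g_n$ because its processing time $\Delta$ is set to occupy the tail of the horizon). I would scrutinize that bookkeeping, as it is the only genuinely delicate point; everything else is routine.
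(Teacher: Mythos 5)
Your proposal takes essentially the same route as the paper: the corollary is proved there exactly by the discussion following \Cref{construction:w2hardness_preccon}, namely the schedule equivalence between $\mathcal I$ and $\mathcal I'$ (every feasible schedule of $\mathcal I$ remains feasible for $\mathcal I'$, and the long last guard job blocks the tail so the converse holds), prec-consistency via the arithmetic on $r_{x_j^i}$, and the bounds $\#p\le 3$, $\#(d_j-r_j)\le 2$, $k=\ell+1$, combined with the hardness of \textsc{Binary Shuffle Product}. One minor slip: only the last guard job $g_n$ has processing time $\Delta$ (the others have processing time $q$, since $d_{g_i}=r_{g_i}+q$ for $i<n$), but this does not affect the bound $\#p\le 3$ or any other step.
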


\section{Algorithmic results}\label{sec:algorithms}

\subsection{\XP algorithm parameterized by the precedence width $w$}
We begin with an algorithm resolving a special case when the precedence relation consists of $k$ chains. Note that $k$ is in this case equal to the width $w$ of the precedence relation.

\begin{lemma}\label{lem:dp_xp_chains}
    The problem $1|\operatorname{chains}(k),r_j,d_j|C_{\max}$ can be solved in $n^{O(k)}$ time.
\end{lemma}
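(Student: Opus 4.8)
The plan is to solve the problem by dynamic programming over the $k$ chains, indexed by how far we have progressed along each chain. Write $Q_i=(q^i_1\prec q^i_2\prec\cdots\prec q^i_{n_i})$ for the $i$-th chain, so that $n=\sum_i n_i$. For a vector $\vec a=(a_1,\dots,a_k)$ with $0\le a_i\le n_i$, call $P(\vec a)=\bigcup_i\{q^i_1,\dots,q^i_{a_i}\}$ the \emph{configuration} $\vec a$; there are at most $\prod_i(n_i+1)\le(n+1)^k$ configurations. I would define $T(\vec a)$ to be the minimum makespan over all feasible schedules of (exactly) the job set $P(\vec a)$, with $T(\vec a)=\infty$ if no feasible schedule exists, and then show $T$ satisfies a simple recurrence; the answer to the problem is $T(n_1,\dots,n_k)$.

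The structural heart of the argument is a short exchange lemma. First observe that in a feasible schedule, if $j\prec j'$ then $\sigma_j\le\sigma_{j'}$ together with non-overlap forces $\sigma_j+p_j\le\sigma_{j'}$, hence $C_j<C_{j'}$; in particular the jobs of every chain complete in chain order, so the set of the first $m$ jobs to complete in any feasible schedule is itself a configuration, and the last job to complete in a feasible schedule of $P(\vec a)$ is $q^i_{a_i}$ for some $i$ with $a_i\ge 1$. Using this I would establish, with $e_i$ the $i$-th unit vector and $s_i:=\max\bigl(T(\vec a-e_i),\,r_{q^i_{a_i}}\bigr)$, the recurrence $T(\vec 0)=0$ and, for $\vec a\ne\vec 0$,
\[
T(\vec a)=\min_{i\,:\,a_i\ge 1}\ \begin{cases} s_i+p_{q^i_{a_i}}, & \text{if } s_i+p_{q^i_{a_i}}\le d_{q^i_{a_i}},\\ \infty, & \text{otherwise.}\end{cases}
\]
The inequality ``$\le$'' is witnessed by the schedule obtained from an optimal schedule of $P(\vec a-e_i)$ by appending $q^i_{a_i}$ at time $s_i$ (feasible because $q^i_1,\dots,q^i_{a_i-1}\in P(\vec a-e_i)$ all precede it and chains are otherwise independent). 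For ``$\ge$'', take an optimal schedule $\sigma$ of $P(\vec a)$ and let $q^i_{a_i}$ be its last-completing job; restricting $\sigma$ to $P(\vec a-e_i)$ gives a feasible schedule of makespan at most $\sigma_{q^i_{a_i}}$, so $T(\vec a-e_i)\le\sigma_{q^i_{a_i}}$ and hence $s_i\le\sigma_{q^i_{a_i}}$. Replacing that restriction by an optimal schedule of $P(\vec a-e_i)$ and re-inserting $q^i_{a_i}$ at $s_i$ only moves $q^i_{a_i}$ earlier, so the resulting feasible schedule of $P(\vec a)$ has makespan at most $C_{q^i_{a_i}}=T(\vec a)$, and it obviously satisfies the deadline of $q^i_{a_i}$; this yields the term for index $i$ equal to $T(\vec a)$.

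Computing the table in order of increasing $|\vec a|=\sum_i a_i$ takes $O(k)$ arithmetic operations per entry on numbers of bit-length polynomial in the input, over at most $(n+1)^k$ entries, for a total running time of $n^{O(k)}$; reading off $T(n_1,\dots,n_k)$ gives the optimal makespan (and $\infty$ signals infeasibility). I expect the only genuine difficulty to be the exchange argument for ``$\ge$'': the two facts that each chain completes in order and that ``take an optimum for the prefix, then append greedily at $\max(\text{current time},\text{release time})$'' never increases the makespan are exactly what make the DP correct over a polynomial-size set of relevant times, avoiding any enumeration of the exponentially many a-priori candidate start times.
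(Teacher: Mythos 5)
Your proposal is correct and takes essentially the same approach as the paper's proof: the identical dynamic program indexed by how many jobs of each chain have been completed, the same recurrence $s_i=\max\bigl(T(\vec a-e_i),r_{q^i_{a_i}}\bigr)$ with a deadline check, and the same two-directional exchange/induction argument for correctness. The only (harmless) addition is your explicit observation that precedence plus non-overlap forces each chain to complete in order, which the paper uses implicitly when it says the last scheduled job must be the last job of some chain.
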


\begin{proof}
    Let $J$ be the set of jobs to be scheduled, $|J|=n$ and let $J=Q_1\cup\cdots\cup Q_k$ be the partition of the set of jobs to the $k$ chains, i.e., $Q_i=\{x_1^i,x_2^i,\ldots,x_{|Q_i|}^i\},x_1^i\prec x_2^i\prec  \cdots \prec x_{|Q_i|}^i$. We use dynamic programming to solve the problem. For $j_i\in\{0,1,2,\ldots,|Q_i|\}$ we let $\operatorname{DP}(j_1,\ldots,j_k)$ be equal to the minimum makespan of a schedule that processes $j_i$ jobs from the $i$-th chain or $\infty$ if no such schedule exists. We will assume for simplicity that the earliest release time is equal to $0$, otherwise we shift all release times and deadlines accordingly. Clearly, we have $\operatorname{DP}(0,\ldots,0)=0$ and the answer to our problem is yes if and only if $\operatorname{DP}(|Q_1|,\ldots,|Q_k|)\neq \infty$.

    The DP transition is as follows. To compute $\operatorname{DP}(j_1,\ldots,j_k)$ for $j_i\in\{0,1,2,\ldots,|Q_i|\}$, note that one of the jobs in the chains for which $j_i\neq 0$ is scheduled last. For this particular job, we take the maximum of the previous schedule's makespan and the release time of that job. This is the earliest possible starting time $t_{i,j_1,\ldots,j_k}$ for the job $x^i_{j_i}$, i.e., $$t_{i,j_1,\ldots,j_k}=\max\left\{\operatorname{DP}(j_1,\ldots,j_i-1,\ldots,j_k),r(x^i_{j_i})\right\}.$$
    We now incorporate the deadlines. Note that if 
    $t_{i,j_1,\ldots,j_k}+p(x_{j_i}^i)>{d}(x_{j_i}^i)$, then there is no such schedule where the job $x_{j_i}^i$ is scheduled last. We thus obtain the following recurrence for $\operatorname{DP}(j_1,\ldots,j_k)$:
    \begin{align}
    \operatorname{DP}(j_1,\ldots,j_k)=\min_{i\in[k],j_i\neq 0}
    \begin{cases}
            t_{i,j_1,\ldots,j_k}+p(x_{j_i}^i) & \text{ if $t_{i,j_1,\ldots,j_k}+p(x_{j_i}^i)\leq {d}(x_{j_i}^i)$} \\
            \infty & \text{otherwise.}
    \end{cases}\label{eq:dp}
    \end{align}
    \begin{claim}\label{lem_dp_onedir} 
        If $\operatorname{DP}(j_1,\ldots,j_k)=t<\infty$, then there exist a schedule of makespan $t$ scheduling $j_i$ jobs from the $i$-th chain.
    \end{claim}
    \begin{proof}
        We proceed by induction on $j_1+j_2+\ldots+j_k$. Clearly, if $j_1=j_2=\cdots=j_k=0$, then the claim holds. Suppose that $\operatorname{DP}(j_1,\ldots,j_k)=t<\infty$. Then \[t=\max\{\operatorname{DP}(j_1,\ldots,j_i-1,\ldots,j_k),r(x_{j_i}^i)\}+p(x_{j_i}^i)\] for some $i\in[k],j_i\neq 0$ by (\ref{eq:dp}). By induction hypothesis, there is a schedule $\sigma$ for the smaller instance with $j_1,\ldots,j_i-1,\ldots,j_k$ jobs in the chains. We let $\sigma(x_{j_i}^i)=t_{i,j_1,\ldots,j_k}$. It is not hard to verify that $\sigma$ is indeed a feasible schedule of makespan $t$.
    \end{proof}
    
    \begin{claim}\label{lem_dp_twodir}
        If there exists a feasible schedule of makespan $t$ scheduling $j_i$ jobs from the $i$-th chain, then $\operatorname{DP}(j_1,\ldots,j_k)\leq t$.
    \end{claim}
    \begin{proof}
        We proceed by induction on $j_1+\cdots+j_k$. Clearly for $j_1=\cdots=j_k=0$, any schedule has makespan at least $0$ and $\operatorname{DP}(0,\ldots,0)=0$ by definition. Suppose now that $\sigma$ is a schedule of makespan $t$ scheduling $j_i$ jobs from the $i$-th chain. Note that $\sigma$ is feasible and respects the order $\prec$ on the jobs. Consider the last job scheduled by $\sigma$. This has to be one of $x_{j_{i^*}}^{i^*}$ for some $i^*\in[k], j_{i^*}\neq 0$. As the schedule has makespan $t$, the schedule $\sigma'$ restricted to the jobs without $x_{j_{i^*}}^{i^*}$ has makespan at most $t-p(x_{j_{i^*}}^{i^*})$. By induction hypothesis, we obtain $\operatorname{DP}(j_1,\ldots,j_{i^*}-1,\ldots,j_k)\leq t-p(x_{j_{i^*}}^{i^*})$. Note also that the release time of $x_{j_{i^*}}^{i^*}$ satisfies $r(x_{j_{i^*}}^{i^*})\leq t - p(x_{j_{i^*}}^{i^*})$. Hence $t_{i^*,j_1,\ldots,j_k}=\max\{\operatorname{DP}(j_1,\ldots,j_{i^*}-1,\ldots,j_k),r(x_{j_{i^*}}^{i^*})\}\leq t-p(x_{j_{i^*}}^{i^*})$, because both terms inside the maximum are bounded by the right hand side. It follows that the overall minimum in (\ref{eq:dp}) is at most $t-p(x_{j_{i^*}}^{i^*})+p(x_{j_{i^*}}^{i^*})=t$, hence $\operatorname{DP}(j_1,\ldots,j_k)\leq t$ and the proof is finished.
    \end{proof}

    By combining \Cref{lem_dp_onedir,lem_dp_twodir} we have that $\operatorname{DP}(|Q_1|,\ldots,|Q_k|)$ is the minimum makespan of a schedule if one exists, otherwise it is equal to $\infty$. Note that the number of entries in the dynamic programming table is $\prod_{j=1}^k(|Q_j|+1)\leq n^k$ where $|Q_j|\leq n$ and each transition can be computed in $O(k)$ time. This gives the final complexity bound $n^{O(k)}$, as promised.
\end{proof}

We are now ready to extend the dynamic programming to general precedence constraints of bounded width $w$. 

\begin{theorem}\label{thm:xp_width}
    The problem $1|\operatorname{prec},r_j,d_j|C_{\max}$ can be solved in $n^{O(w)}$ time, where $n$ is the number of items and $w$ is the width of the precedence relation.
\end{theorem}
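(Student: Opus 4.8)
The plan is to generalize the chain dynamic programming from \Cref{lem:dp_xp_chains} by replacing the notion of ``a prefix of each chain'' with the notion of a \emph{down-set} (order ideal) of the partial order $\prec$, that is, a set $S \subseteq J$ such that $j \in S$ and $j' \prec j$ imply $j' \in S$. A down-set is exactly a set of jobs that can legitimately form the already-scheduled portion of a feasible schedule that respects $\prec$. The key combinatorial fact is Dilworth/Mirsky-style: every down-set of a width-$w$ order is determined by an antichain of size at most $w$ (its maximal elements), and the number of antichains of size at most $w$ in an $n$-element order is at most $\sum_{i=0}^{w}\binom{n}{i} = n^{O(w)}$. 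Hence there are only $n^{O(w)}$ down-sets, and we can enumerate them, together with a linear extension of $\prec$, in time $n^{O(w)}$.

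First I would set up the DP table $\operatorname{DP}(S)$, indexed by down-sets $S$, holding the minimum makespan of a feasible schedule of exactly the jobs in $S$ (respecting release times, deadlines, and $\prec$), or $\infty$ if none exists. The base case is $\operatorname{DP}(\emptyset) = 0$ and the answer is ``yes'' iff $\operatorname{DP}(J) \neq \infty$. For the transition, observe that the last job scheduled in any feasible schedule of $S$ must be a $\prec$-maximal element $j$ of $S$, and removing $j$ leaves a feasible schedule of the down-set $S \setminus \{j\}$; conversely, appending such a $j$ at time $\max\{\operatorname{DP}(S\setminus\{j\}), r_j\}$ is feasible iff that start time plus $p_j$ does not exceed $d_j$. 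This yields
\begin{align}
\operatorname{DP}(S) = \min_{\substack{j \text{ maximal in } S}}
\begin{cases}
\max\{\operatorname{DP}(S\setminus\{j\}), r_j\} + p_j & \text{if } \max\{\operatorname{DP}(S\setminus\{j\}), r_j\} + p_j \leq d_j,\\
\infty & \text{otherwise.}
\end{cases}\label{eq:dp_width}
\end{align}
Note that $S \setminus \{j\}$ is again a down-set whenever $j$ is maximal in $S$, so the recurrence stays within the table. Correctness in both directions is proved by induction on $|S|$, exactly mirroring \Cref{lem_dp_onedir} and \Cref{lem_dp_twodir}: the forward direction exhibits the schedule obtained by appending $j$ to the inductively guaranteed schedule of $S \setminus \{j\}$, and the backward direction peels off the last-scheduled job of a given feasible schedule and invokes the hypothesis.

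For the running time, I would: (i) compute the width $w$ and, more importantly, enumerate all down-sets --- this can be done by a standard traversal, e.g. process jobs in a fixed linear extension and branch on inclusion while maintaining the down-set property, charging each generated down-set to its antichain of maximal elements, so the count is $n^{O(w)}$ and the enumeration time is $n^{O(w)}$; (ii) process the down-sets in order of increasing cardinality so that all $S \setminus \{j\}$ are available when $\operatorname{DP}(S)$ is computed; (iii) evaluate \eqref{eq:dp_width}, which costs $O(w)$ time per down-set after identifying its at most $w$ maximal elements, plus a lookup into the table (storing down-sets keyed by, say, their sorted antichain of maximal elements gives $O(w \log n)$ lookups). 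Multiplying, the total is $n^{O(w)}$, as claimed. The main obstacle --- and the only place the argument is not a verbatim lift of the chain case --- is the bookkeeping in step (i)/(iii): one must argue cleanly that down-sets are in bijection with their maximal-element antichains, that there are at most $n^{O(w)}$ of them, and that this correspondence can be navigated algorithmically (including recovering $S \setminus \{j\}$ from $S$'s antichain representation) within the $n^{O(w)}$ budget. Everything else --- the recurrence and its two-directional inductive correctness proof --- is a direct generalization of \Cref{lem:dp_xp_chains}, obtained by reading ``prefix-tuple $(j_1,\dots,j_k)$'' as ``down-set $S$'' and ``chain index $i$ with $j_i \neq 0$'' as ``maximal element $j$ of $S$''.
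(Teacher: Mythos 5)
Your proof is correct, and the underlying dynamic program is the same one the paper uses: peel off the last-scheduled job, which must be $\prec$-maximal among the jobs already placed, and recurse. The difference is in how the state space is organized. The paper first invokes Dilworth's theorem to decompose $\prec$ into $w$ chains and then reuses the table of \Cref{lem:dp_xp_chains} indexed by prefix tuples $(j_1,\ldots,j_k)$, merely restricting the transition to those chain heads that are maximal among the included jobs; this gives at most $(n+1)^w$ states for free, but it requires actually computing a minimum chain decomposition (polynomial via bipartite matching, a point the paper leaves implicit) and it quietly tolerates table entries whose job sets are not down-sets of $\prec$ (harmless, since only down-set states feed into the final answer, but it deserves a remark). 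You instead index directly by down-sets, represented by their antichains of maximal elements, and bound the table size by the number of antichains, $\sum_{i\le w}\binom{n}{i}=n^{O(w)}$. This buys a cleaner invariant --- every state is genuinely realizable as the already-scheduled portion of a partial schedule --- and removes the dependence on Dilworth's theorem entirely, at the cost of the enumeration and lookup bookkeeping you flag in step (i), which you handle adequately. Both routes give $n^{O(w)}$; yours is the more self-contained write-up, the paper's the shorter one given that \Cref{lem:dp_xp_chains} is already in place.
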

\begin{proof}
    By Dilworth's theorem~\citep{Dilworth50}, the precedence relation can be decomposed into $w$ chains. The algorithm follows the same dynamic programming approach as in \Cref{lem:dp_xp_chains}, however when computing $\operatorname{DP}(j_1,\ldots,j_k)$, instead of computing the minimum over all $i\in[k],j_i\neq 0$, we consider only those $i\in[k],j_i\neq 0$ such that there is no job $x_j^{i'}$ for some $j\leq j_{i'}$ and $i'\neq i$ such that $x_{j_i}^i \prec x_j^{i'}$. The remaining reasoning and the proofs of the dynamic programming remain the same.
\end{proof}

\subsection{Polynomial time algorithm when $\#(d_j-r_j)=1$}
\begin{theorem}\label{thm:polytime_onewindow}
    If $\#(d_j-r_j)=1$, then $1|\operatorname{prec},r_j,d_j|C_{\max}$ with $\operatorname{prec}$-consistent time windows is solvable in polynomial time.
\end{theorem}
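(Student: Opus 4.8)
The plan is to show that, under the two hypotheses, there is always an optimal schedule that runs the jobs in the \emph{canonical order}---sort the jobs by non-decreasing release time, breaking ties by index---and that once this is known a single greedy sweep computes the answer.

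First I would record the structural consequences of the hypotheses. Let $W$ be the common window length, so $d_j=r_j+W$ for every job $j$; in particular $r_j\le r_{j'}$ implies $d_j\le d_{j'}$, so the canonical order is simultaneously an earliest-release-date and an earliest-deadline order. Moreover, if $j\prec j'$ then $\operatorname{prec}$-consistency gives $r_{j'}\ge r_j+p_j>r_j$ (processing times are positive), so $j$ and $j'$ are strictly ordered by release time; hence the canonical order is a linear extension of $\prec$, and any two jobs with equal release time are $\prec$-incomparable, so the tie-breaking is irrelevant to feasibility.

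Next I would prove the exchange lemma: if the instance admits a feasible schedule, then it admits one that starts the jobs in the canonical order and has no larger makespan. Given a feasible schedule whose start-time sequence is not canonical, there is an adjacent pair $j_{(t)},j_{(t+1)}$ started in that order but with $j_{(t+1)}$ earlier in the canonical order; write $a=j_{(t+1)}$, $b=j_{(t)}$, with slots $[s_b,s_b+p_b)$ and $[s_a,s_a+p_a)$, $s_b+p_b\le s_a$, and no job started in between. I would reschedule $a$ into $[s_b,s_b+p_a)$ and $b$ into $[s_a+p_a-p_b,\,s_a+p_a)$, leaving all other jobs fixed, and verify: the two new slots are disjoint because $s_a\ge s_b+p_b$; both lie inside $[s_b,s_a+p_a)$, so no other job is disturbed and the makespan does not grow; $a$ meets its release ($s_b\ge r_b\ge r_a$) and deadline ($s_b+p_a\le s_a+p_a\le d_a$); $b$ meets its release ($r_b\le s_b\le s_a-p_b\le s_a+p_a-p_b$) and deadline ($s_a+p_a\le d_a\le d_b$, using $r_a\le r_b$); and precedence is preserved, since $b\prec a$ would force $r_a\ge r_b+p_b>r_a$, while the relative order of every other pair is unchanged. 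Because $a$ and $b$ stay adjacent in the start-time sequence and keep their order relative to all other jobs, this strictly decreases the number of inversions with respect to the canonical order, so after at most $\binom{n}{2}$ such swaps a feasible canonical-order schedule is reached.

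Finally I would analyse the greedy algorithm: process the jobs in the canonical order $j_1,\dots,j_n$, setting $\sigma_{j_i}=\max(\sigma_{j_{i-1}}+p_{j_{i-1}},\,r_{j_i})$ (with the empty-schedule completion time $0$), and declare infeasibility as soon as $\sigma_{j_i}+p_{j_i}>d_{j_i}$. A short induction on $i$ shows $\sigma_{j_i}+p_{j_i}$ is the minimum completion time of $j_i$ over all canonical-order schedules; hence the greedy schedule is feasible exactly when some canonical-order schedule is, which by the exchange lemma is exactly when the instance is feasible, and in that case $\sigma_{j_n}+p_{j_n}$ is the optimal makespan. Sorting costs $O(n\log n)$ and the sweep $O(n)$. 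I expect the exchange lemma to be the main obstacle: choosing the repositioning of $a$ and $b$ so that all release, deadline, precedence, and non-overlap constraints (including those of the untouched neighbours) hold simultaneously, and making the inversion-count argument airtight so that the procedure provably terminates.
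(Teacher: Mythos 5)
Your proof is correct, but it takes a genuinely different route from the paper. The paper's argument is a two-line reduction to existing machinery: when $\#(d_j-r_j)=1$ no time window can strictly contain another on both ends, so the \emph{proper level} of the instance is $0$, and the \FPT algorithm of \cite{MallemHM24} parameterized by proper level immediately gives polynomial time. You instead give a self-contained exchange argument: the common window length makes the earliest-release order coincide with the earliest-deadline order, prec-consistency (only the release-time half, in fact) makes that order a linear extension of $\prec$, and your adjacent-swap lemma plus greedy sweep yields an explicit $O(n\log n)$ algorithm that also returns the minimum makespan. I checked the delicate steps of the swap --- disjointness of the new slots via $s_b+p_b\le s_a$, confinement to $[s_b,s_a+p_a)$ so no third job is disturbed, the deadline check $s_a+p_a\le d_a\le d_b$ for the demoted job, the impossibility of $b\prec a$, and the strict decrease in inversion count --- and they all go through. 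What each approach buys: the paper's proof is shorter and inherits correctness from a more general theorem, but is a black box; yours is elementary, gives a concrete near-linear-time algorithm with an explicit optimal schedule, and makes visible that only the release-time half of prec-consistency is used. Either would be acceptable; if you keep yours, state explicitly (as you do implicitly) that start times in a canonical-order schedule are strictly increasing, so the last-started job determines the makespan and the greedy sweep indeed minimizes it.
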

\begin{proof}
    We use the notion of \emph{proper level} introduced by \cite{MallemHM24}. The \emph{proper level} of a job $j\in J$ is the number of jobs $j'$ with $r_{j'}<r_j$ and $d_j<d_{j'}$. In other words, it is the number of jobs whose time window strictly contains the time window of $j$. The proper level ($\operatorname{pl})$ of the entire instance is the maximum proper level of a job.
    \cite{MallemHM24} showed that $1|\operatorname{prec},r_j,d_j|C_{\max}$ is \FPT parameterized by $\operatorname{pl}$. Note that when $\#(d_j-r_j)=1$, then $\operatorname{pl}=0$, hence the problem is solvable in polynomial time in this case.
\end{proof}

\subsection{\FPT algorithm when chains are uniform}

We investigate the complexity of $1|\operatorname{chains}(k),p_j=p_Q,r_Q,{d}_Q|*$. In other words, we assume that in one chain, all jobs have the same processing time, release time, and deadline. Observe that in such a scenario, we can drop the constraints on the precedence between jobs because in every schedule we can reorder them according to the original requirement $\prec$, hence this can be reduced to the \FPT algorithm by \cite[Theorem 27]{KnopKLMO19} parameterized by $\#(r_j,d_j,p_j)$. We have shown:

\begin{theorem}\label{thm:knop_red}
    The problem $1|\operatorname{chains}(k),p_j=p_Q,r_Q,{d}_Q|*$ is \FPT parameterized by $k$.
\end{theorem}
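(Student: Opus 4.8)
The plan is to eliminate the precedence constraints entirely and then appeal to the \FPT algorithm of \cite[Theorem~27]{KnopKLMO19}. The point is that, under chain-uniformity, the jobs inside a single chain are completely interchangeable, so the chains carry no real ordering information, and the problem collapses to a bounded-number-of-types instance of $1|r_j,d_j|C_{\max}$.

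Concretely, I would first fix the partition $J = Q_1 \cup \cdots \cup Q_k$ into chains and recall that every job of a chain $Q_i$ has the same release time $r_{Q_i}$, deadline $d_{Q_i}$, and processing time $p_{Q_i}$. Let $\mathcal{I}'$ be the instance obtained from $\mathcal{I}$ by discarding $\prec$ but keeping all jobs and their numerical data. Every feasible schedule of $\mathcal{I}$ is trivially feasible for $\mathcal{I}'$. For the converse I would take a feasible schedule $\sigma$ of $\mathcal{I}'$ and, for each chain $Q_i$ separately, reassign the start times used by the jobs of $Q_i$ so that the job $x_j^i$ receives the $j$-th smallest of them; since those jobs share the same triple $(r,d,p)$, this reshuffling preserves the release, deadline, and non-overlap constraints, now also respects $\prec$, and leaves the makespan unchanged. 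Hence $\mathcal{I}$ and $\mathcal{I}'$ are equivalent instances, and the transformation runs in linear time.

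It then remains to observe that $\mathcal{I}'$ is an instance of $1|r_j,d_j|C_{\max}$ whose number of job types $\#(r_j,d_j,p_j)$ is at most $k$, since all jobs of $Q_i$ carry the common type $(r_{Q_i}, d_{Q_i}, p_{Q_i})$. Feeding $\mathcal{I}'$ to the algorithm of \cite[Theorem~27]{KnopKLMO19}, which runs in $f(\#(r_j,d_j,p_j)) \cdot \operatorname{poly}(n)$ time, yields the claimed $f(k)\cdot\operatorname{poly}(n)$ bound (a feasible schedule for the $*$ objective exists precisely when the returned optimal makespan is finite). The only delicate step is the interchange argument establishing the equivalence of $\mathcal{I}$ and $\mathcal{I}'$, and it is also the place where the hypothesis is used in full strength: if even one of $r_Q$, $d_Q$, $p_Q$ were allowed to vary within a chain, permuting the jobs of that chain could destroy feasibility and the reduction would break down.
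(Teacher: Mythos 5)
Your proposal is correct and follows essentially the same route as the paper: drop the precedence constraints, use the interchangeability of the identically-typed jobs within each chain to restore $\prec$ in any feasible schedule, and invoke the algorithm of \cite[Theorem~27]{KnopKLMO19} with $\#(r_j,d_j,p_j)\leq k$. Your write-up merely spells out the exchange argument that the paper leaves as an observation.
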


Note that in \Cref{thm:w2hardness}, all the items have only two possible processing times, although the processing times are not chain-uniform.

\section{Strategies for designing RPA scheduling algorithms}
\label{sec:rpa_pc_relation_discussion}

From the perspective of designing scheduling algorithms, the RPA domain is relatively unique. At first glance, it may appear similar to problems related to compute clusters~\citep{Khallouli2022}. However, clusters are often used by a wider group of users and the spectrum of computational jobs is much more diverse. Conversely, RPA is connected with the business processes of one company which are well defined and under the company's control. Nevertheless, the common objective is efficient utilization of computational resources while jobs needs to be finished on time. Additionally, the computational time available for scheduling is limited, as the scheduling algorithm needs to respond to changing demand in the form of new items generated by the RPA.

Scheduling of RPA has several specifics that may indicate that the related scheduling problems may be easier to solve than, for example, production scheduling problems. In RPA one may assume that items from a given queue have equal processing times and that there is a constant difference between $r_j$ and $d_j$ for all items or all items of a queue (i.e., SLA of the corresponding business process). Nevertheless, Table~\ref{tab:table_results} and the new complexity results presented in this paper show that even single machine variant of the problem can be difficult to solve. An example is the problem $1|\operatorname{chains},r_j,d_j,p_j\in\{p,q\}|*$, that was shown to be \pNP-hard for parameter $\#(d_j-r_j)$ in Theorem~\ref{thm:w2hardness}. However, there are some problem properties that can be exploited for designing sufficiently fast exact or highly efficient heuristic scheduling algorithms.

\subsection{Queues and $\#(d_j-r_j)=1$}

RPA scheduling instances typically consist of hundreds but more likely thousands of items. Therefore, if the problem is solved via, e.g., constraint programming or a heuristic, it is meaningful to assume coarse-grained models. One natural approach is to group multiple items from the same queue into a single job. A clever strategy for merging items into jobs is essential for achieving the desired computation time of the algorithm while preserving the solution quality and feasibility.

In each queue, there is a time stamp at the arrival of an item and then the deadline is fixed with respect to this time stamp. Therefore, the order in the queue is determined by the time stamp and we can assume that in a queue all release times are different.
Notice that if there are at most $R$ jobs that can arrive at the same time $t$, we can choose $\varepsilon<\frac{1}{R+1}$ and set the release times to $t,t+\varepsilon,\ldots, t+R\varepsilon$ and the deadlines accordingly. Any integer schedule that respects the deadline constraints will also satisfy the initial deadlines.

Assuming different release times in each queue, we know that if $\#(d_j-r_j)=1$ the problem with independent jobs has proper level $0$ and that the EDD (Earliest Due Date first) schedule is dominant for the maximum latency $L_{\max}=\max_j(C_j-d_j)$. This implies that the EDD schedule will satisfy the order of items in each queue and therefore the items can be executed in the EDD order within each job.

It is interesting to note that problems $P|r_j,p_j=p,d_j|*$ and $P|r_j,p_j=p,d_j|\sum C_j$ (regardless $\#(d_j-r_j)=1$ or not) can be also solved in polynomial time using linear programming~\citep{Brucker2008}. Therefore, if jobs are constructed from items such that their processing times are practically identical and there is no need for precedence constraints (e.g., when all items arrive at the same time), then the problem can be solved in polynomial time.

\subsection{Equal processing time per queue and proper intervals in each queue}
Now, let us focus on the properties of queues in RPA, i.e., for each queue $Q$ there is a specific $d_j-r_j= d_Q-r_Q$ and a specific processing time $p_Q$. Assuming similarly that release times are different in each queue, we can observe that the independent items setting can have large proper levels (and even large path-width) so it seems unlikely to get a direct outcome of \FPT results with those parameters (see~\Cref{tab:table_results}).

However, we can also say that as items in a queue  will always be performed in the non-decreasing deadline order (since a swap of items in a different order is always possible) on a single machine as long as they share the same processing time and that $r_i\le r_j$ implies $d_i\le d_j$ in a queue. So that we can neglect the precedence constraint in this case.
By the result of \cite{CARLIER2004}, in this case EDD will produce a schedule with $L_{\max}^{EDD}\le L_{\max}^{OPT}+2(p_{\max}-1)$ even if we have parallel machines.
The result from \cite{CARLIER2004} can be slightly improved for the single machine case to
\begin{equation}
    L_{\max}^{EDD}\le L_{\max}^{OPT} + p_{\max}-1.
\end{equation}
To see this, consider the EDD schedule, and let $i$ be a job such that $L_{\max}^{EDD}=C_i^{EDD}-d_i$. Let $t\le C_i-p_i$ be the rightmost time such that either there is an idle period in interval $[t-1,t)$ or there is a job $j$  such that $d_j>d_i$ and $C_j=t$. Let $J'$ be the set of jobs performed between $t$ and $C_i$. If $k\in J'$ then $r_k > t-p_j$ if $j$ exists or $r_k\ge t$ otherwise.

Now, $C_i^{EDD}=t+\sum_{k\in J'}p_k$. 
So, if $j$ exists, then
\begin{align}
 C_i^{EDD}-p_j=t+\sum_{k\in J'}p_k-p_j<\min_{k\in J'}r_k+\sum_{k\in J'}p_k\le\max_{k\in J'}C_k^{OPT}   
\end{align}
which implies 
\begin{align}
   L_{\max}^{OPT}\ge \max_{k\in J'}C_k^{OPT}-d_i> C_i^{EDD}-d_i-p_j=L_{\max}^{EDD}-p_j. 
\end{align}

In the case there is no such $j$, we have $C_i^{EDD}=\min_{k\in J'}r_k+\sum_{k\in J'}p_k\le\max_{k\in J'}C_k^{OPT}$, so that $L_{\max}^{EDD}\le L_{\max}^{OPT}$.

This property can be exploited in two-stage scheduling approaches, where in the first phase the jobs are assigned to machines and in the second one the assigned jobs are sequenced. The EDD rule can be used in both phases to estimate whether the assignment or sequencing will comply with the prescribed SLA. Furthermore, since the maximum error of the EDD rule is given by $p_{\max}$, this should be taken into account when deciding how many items to merge into one job, as discussed in the previous subsection. It is evident that larger jobs can result in a greater discrepancy between the EDD schedule and the optimal sequencing.

If the RPA scheduling problem is solved using constraint programming, the assignment computed in the first phase of the two-stage approach can be used to warm-start a constraint programming solver. The solver is provided with the assignment of jobs to machines, while their start times are left uninitialised.
A good warm start can guide the solver toward promising regions of the search space, thereby speeding up the search. This technique is seen as one of the most efficient approaches to speed up constraint programming model solving, e.g.,~\cite{kovacs_et_al:2021}.
If the constraint programming model remains inefficient, then the assignment of jobs can be fixed in the model.

\subsection{Problem decomposition exploiting fixed parameters}
As can be seen from Table~\ref{tab:table_results}, a fixed processing time, release and deadline per queue is advantageous for single-machine RPA-related scheduling. On the other hand, in general, even with just two fixed processing times, the problem is difficult to solve (see~\Cref{thm:w2hardness}). However, if the number of queues assigned to one machine is a small constant, then even this problem can be solved exactly via dynamic programming in polynomial time (see~\Cref{thm:xp_width}).
Conversely, problems $1|\operatorname{chains}(k),p_j=p_Q,r_Q,{d}_Q|*$ and $1|r_j,d_j|C_{\max}$ can be solved efficiently for a fixed $k$ and $\#(r_j,d_j,p_j)$ respectively. This suggests that such problems can be solved relatively quickly by the referenced algorithms for small values of these parameters. It is also reasonable to expect general constraint programming solvers to be highly efficient if the model is appropriately defined with regard to the fixed parameters. For example, the algorithm may construct jobs from items in such a way that $\#(r_j,d_j,p_j)$ is kept low. Another strategy may be to restrict the number of queues that can be assigned to single machine, i.e., restrict $k$. This may cut off the optimal solution but on the other hand it may eliminate many symmetrical solutions and make the sequencing problem easier to solve.

This strategy provides an opportunity to design, for example, a Logic-Based Benders decomposition approach, where the assignment of jobs to machines is carried out by a master problem formulated as integer linear programming. This assignment can be modeled as a bin packing, with the aim of balancing machine loads while also minimizing either $k$ or $\#(r_j,d_j,p_j)$. These parameters can be minimized by limiting the feasible assignment or by the specific construction of jobs from items. Even if this strategy eliminates some  optimal solutions, the approach can still be used as a highly efficient heuristic. Subproblems operating over a simplified solution space verify the feasibility of the assignment by scheduling the jobs on the machine and indicating infeasibility to the master problem by generating infeasibility cuts.

\section{Conclusion}\label{sec:conclusion}

We studied the complexity-theoretical side of scheduling problems related to RPA from the parameterized complexity perspective. With focus on parameters that can be considered small in real-world instances of RPA scheduling, we encountered \pNP-hardness results for most of them, namely $\#r_j$, $\#d_j$, $\#p_j$, $p_{\max}$, $\#(d_j-r_j)$ and strong $\W{2}$-hardness of $1|\operatorname{chains}(k),r_j,d_j|*$ parameterized by the number of chains $k$. On the positive side, we demonstrated that the problem becomes \FPT when the release times, deadlines, and processing times are chain-uniform and with general precedences, it becomes \XP parameterized by the width of the precedence relation. Finally, we identified a tractable case when there is a single length of a time window, i.e., $\#(d_j-r_j)=1$, which corresponds to the practical instance of RPA, where all the jobs share a common deadline relative to their release time.

We propose the following future research directions. An open question is the complexity of $1|\operatorname{chains},r_j,d_j|C_{\max}$ parameterized by $\#(r_j,d_j,p_j)$. Note that the \FPT algorithm of \cite{KnopKLMO19} is not applicable in the presence of precedences, and by the result of \cite{LeslieSh92}, the problem with general precedences is \pNP-hard parameterized by $\#(r_j,d_j,p_j)$. 
Another open question is the complexity of $1|\operatorname{chains}(k),r_j,d_j,p_j=p_Q|*$. This scenario corresponds to the jobs in the same chain having uniform processing time, but not necessarily release or deadline. Is the problem \FPT parameterized by $k$?

\section*{Acknowledgement}
This work was supported by the Technology Agency of the Czech Republic under the Project FW11020080 and co-funded by the European Union under the project ROBOPROX - Robotics and Advanced Industrial Production (reg. no. CZ.02.01.01/00/22\_008/0004590).

\bibliographystyle{elsarticle-harv}
\bibliography{references}

\end{document}